\def \gentitle {Significant Scales in Community Structure}
\definecolor{link-color}{cmyk}{0.8 ,  0.3 ,  0. , 0}
\newtheorem{theorem}{Theorem}
\newcommand{\Hf}{\mathcal{H}}
\newcommand{\res}{\gamma}
\newcommand{\samecomm}{\delta(\sigma_i, \sigma_j)}
\newcommand{\E}{\mathbb{E}}
\newcommand{\Sig}{\mathcal{S}}
\newcommand{\G}{\mathcal{G}}
\newcommand{\midd}{\parallel}
\newcommand{\eqtxt}{equation}
\DeclareMathOperator{\argmax}{arg\,max}
\begin{document}

  \author{V.A. Traag}
  \email{vincent.traag@uclouvain.be}
  \affiliation{ICTEAM, Universit\'e catholique de Louvain}
  \affiliation{Royal Netherlands Institute of Southeast Asian and Caribbean Studies}
  \author{G. Krings}
  \affiliation{Real Impact Analytics}
  \author{P. Van Dooren}
  \affiliation{ICTEAM, Universit\'e catholique de Louvain}

  \title{\gentitle}

\begin{abstract}

  Many complex networks show signs of modular structure, uncovered by community
  detection. Although many methods succeed in revealing various partitions, it
  remains difficult to detect at what scale some partition is significant. This
  problem shows foremost in multi-resolution methods. We here introduce an
  efficient method for scanning for resolutions in one such method.
  Additionally, we introduce the notion of ``significance'' of a partition,
  based on subgraph probabilities. Significance is independent of the exact
  method used, so could also be applied in other methods, and can be interpreted
  as the gain in encoding a graph by making use of a partition. Using
  significance, we can determine ``good'' resolution parameters, which we
  demonstrate on benchmark networks. Moreover, optimizing significance itself
  also shows excellent performance. We demonstrate our method on voting data
  from the European Parliament. Our analysis suggests the European Parliament
  has become increasingly ideologically divided and that nationality plays no
  role.

\end{abstract}

\maketitle

\section{Introduction}

Networks appear naturally in many fields of science, and are often inherently
complex structures. By looking at the modular structure of a network we can
reduce its complexity to some extent, yielding a ``bird's-eye view'' of the
network~\cite{Fortunato2010,Porter2009,Newman2012}. 

Although there is no universally accepted definition of a community, there are
some commonly accepted principles. We denote by $G = (V,E)$ a graph with nodes
$V$ and edges $E \subseteq V \times V$, where the graph has $n=|V|$ number of
nodes and $m=|E|$ number of edges, and is said to have a density of $p =
m/\binom{n}{2}$. The idea is that in general, we want to reward links within
communities with some weight $a_{ij}$, while we want to punish missing links
within communities with some weight $b_{ij}$. Working out this idea we arrive at
\begin{equation}
  \Hf(\sigma) = -\sum_{ij} \bigl[a_{ij} A_{ij} - b_{ij} (1 - A_{ij}) \bigr]\samecomm,
\end{equation}
for the ``cost'' of a partition $\sigma$. Here $A_{ij}$ is the adjacency
matrix, which is $A_{ij} = 1$ if there is a link between $i$ and $j$ and zero
otherwise, $\sigma_i$ denotes the community of node $i$, and $\samecomm = 1$ if
and only if $\sigma_i = \sigma_j$ and zero otherwise. This is a slightly more
simplified version of the approach by Reichardt and
Bornholdt~\cite{Reichardt2006Statistical}. We will restrict ourselves here to
simple, unweighted graphs.

Different weights $a_{ij}$ and $b_{ij}$ give rise to different methods. One can
imagine for example taking the number of common neighbours as weight $b_{ij}$,
the distance of the shortest path or some transition probability in a random
walk. Many methods have been developed over the years, but the most noteworthy
method is that of modularity~\cite{Newman2004Finding} which uses $a_{ij} = 1 -
p_{ij}$, $b_{ij} = p_{ij}$ where $p_{ij}$ is some random null-model. It has
risen to prominence because it showed encouraging results in various fields,
ranging from ecology~\cite{Guimera2010,Stouffer2011} and
biology~\cite{Guimera2005Functional,Kashtan2005} to political
science~\cite{Mucha2010} and sociology~\cite{Conover2013}. 

Nonetheless modularity was found to be seriously flawed. Its biggest problem is
the resolution limit~\cite{Fortunato2007a,Kumpula2007a}, which states that
modularity is unable to detect relatively small communities in large networks.
We showed previously that methods that use local weights (i.e.  $a_{ij}$ and
$b_{ij}$ are independent of the graph) do not suffer from the resolution
limit~\cite{Traag2011}, and are hence called resolution limit free. Within this
framework there are relatively few methods that are resolution limit free. One
such method is the Constant Potts Model~\cite{Traag2011} (CPM). This model has
as weights $a_{ij} = 1 - \res$ and $b_{ij} = \res$ where $\res$ is a so-called
resolution parameter (see next paragraph), resulting in 
\begin{equation}
  \Hf(\sigma,\res) = -\sum_{ij} \bigl[ A_{ij} - \res \bigr] \samecomm.
  \label{equ:CPM}
\end{equation}
Rewriting this in terms of communities, we arrive at
\begin{equation}
  \Hf(\sigma,\res) = -\sum_c \bigl[e_{c} - \res n_c^2\bigr],
  \label{equ:CPM_per_comm}
\end{equation}
where $e_c$ is the number of edges within community $c$ (or twice\footnote{This
is due to double counting in $\sum_{ij} (A_{ij} -
\res)\delta(\sigma_i, \sigma_j)$ for undirected graphs} for undirected graphs)
and $n_c$ is the number of nodes within community $c$. It can be seen as a
variant of the Reichardt and Bornholdt Potts model when choosing an
Erd\"os-R\'enyi (ER) null model, which assumes that each edge has the same
independent probability of being included $p$. In the remainder of this article,
when speaking of a random graph, we refer to an ER random graph, unless
explicitly stated otherwise.

It is not too difficult to show that any (local) minimum yields a nice
interpretation of the role of the resolution parameter $\res$. Succinctly
stated, communities have an internal density of at least $\res$ and an external
density of at most $\res$. The parameter $\res$ can thus be seen as the desired
density of the communities. The central question in this paper is why and how we
should choose some resolution parameter $\res$.

\section{Results}

Although CPM does not suffer from the resolution limit, there do remain some
problems of scale~\cite{Lancichinetti2011}. In particular, there is no a-priori
way to choose a particular resolution parameter $\res$. We address this issue in
this paper from two complementary perspectives. First we will detail how to
efficiently scan different resolution parameters $\res$ for CPM in
section~\ref{sec:scan_bisect}.  Secondly, we introduce the notion of
``significance'' of a partition (which is independent of any method) in
section~\ref{sec:significance}. Both perspectives help in choosing some
particular resolution parameter $\res$. We will demonstrate the method on
benchmark networks, and show that both scanning for the right resolution
parameter as well as optimizing significance itself shows excellent performance.
As an application of our method, we analyse a network based on votes of the
European Parliament (EP).

\subsection{Scanning resolutions}
\label{sec:scan_bisect}

Often, various measures of stability---how much does the partition change after
some perturbation---are used to determine whether a resolution parameter or a
partition is
``good''~\cite{Mirshahvalad2012,Ronhovde2009,Delvenne2010,Mirshahvalad2013,Mirshahvalad2012}.
In this section we look at stable ``plateaus'': ranges of $\res$ where the same
partition is optimal. If a partition is optimal over the
range of $[\res_1,\res_2]$ then the communities have a density of at least
$\res_2$ and are separated by a density of at most $\res_1$. Hence, the larger
this stable ``plateau'', the more clear-cut the community structure.

For $\res = 0$, the trivial partition of all nodes in a single community is
optimal (since in that case any cut will increase the cost function). On the
other hand, for $\res = 1$ the optimal partition is to have each node in its own
community. This idea holds in general: a higher $\res$ gives rise to smaller
communities. 

The intuitive idea that a partition should remain optimal for some (continuous)
interval of $\res$ can be formalized. More precisely, if $\sigma$ is an optimal
solution for $\res_1$ and $\res_2$, then $\sigma$ is also an optimal solution
for all $\res \in [\res_1, \res_2]$ (which was also remarked in the supporting
information of ref.~\cite{Mucha2010} for a similar method).

\begin{theorem}
Let $\Hf(\res,\sigma)$ be as in \eqtxt~\eqref{equ:CPM}. If $\sigma^*$ is optimal
for both $\res_1$ and $\res_2$, or
$$\sigma^* = \argmax_\sigma \Hf(\res_1,\sigma) = \argmax_\sigma
\Hf(\res_2,\sigma)$$ 
then $\sigma^* = \argmax_\sigma \Hf(\res,\sigma)$ for $\res_1 \leq
\res \leq \res_2$.
\end{theorem}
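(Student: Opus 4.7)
The plan is to exploit the fact that, for every fixed partition $\sigma$, the cost $\Hf(\sigma,\res)$ is an \emph{affine} function of $\res$. Reading this directly off equation~\eqref{equ:CPM_per_comm},
\[
  \Hf(\sigma,\res) = -\sum_c e_c(\sigma) + \res\sum_c n_c(\sigma)^2,
\]
so the slope in $\res$ is the $\sigma$-dependent constant $\sum_c n_c(\sigma)^2$ and the intercept is $-\sum_c e_c(\sigma)$. Geometrically, $\max_\sigma \Hf(\sigma,\res)$ is then the pointwise maximum of a family of affine functions of $\res$, hence convex and piecewise linear, and the theorem is essentially the statement that if an affine function agrees with this upper envelope at two points, it agrees with it on the whole interval between them.

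To make this precise I would fix $\res \in [\res_1,\res_2]$ and write it as a convex combination $\res = t\res_1 + (1-t)\res_2$ with $t \in [0,1]$. Affinity in $\res$ immediately gives, for every partition $\sigma$,
\[
  \Hf(\sigma,\res) = t\,\Hf(\sigma,\res_1) + (1-t)\,\Hf(\sigma,\res_2).
\]
Next I would invoke the two hypotheses $\Hf(\sigma^*,\res_i) \ge \Hf(\sigma,\res_i)$ for $i=1,2$, multiply them by the nonnegative weights $t$ and $1-t$ respectively, and add. The left-hand side collapses by affinity to $\Hf(\sigma^*,\res)$ and the right-hand side to $\Hf(\sigma,\res)$, yielding $\Hf(\sigma^*,\res) \ge \Hf(\sigma,\res)$ for every $\sigma$, which is exactly the claim.

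Since the argument reduces to linearity in $\res$ combined with the addition of two scalar inequalities with nonnegative weights, there is no real technical obstacle; the only point that needs a word of care is that the combining weights $t,1-t$ are nonnegative, which is precisely what $\res_1 \le \res \le \res_2$ guarantees, so the inequalities may be added without reversing. This also makes clear that the statement is not special to CPM but applies to any cost functional that is affine in a one-dimensional parameter.
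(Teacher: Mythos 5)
Your proof is correct and follows essentially the same route as the paper's: both rely on the affinity of $\Hf$ in $\res$, write $\res$ as a convex combination of $\res_1$ and $\res_2$, and combine the two endpoint optimality inequalities with nonnegative weights. The only cosmetic difference is that you consistently use $\geq$ as befits the $\argmax$ formulation, whereas the paper's own proof writes the inequality as $\leq$ (treating $\Hf$ as a cost to be minimized), but the argument is identical in structure.
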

\begin{proof}
  First observe that $\Hf(\res, \sigma)$ is linear in $\res$, which can be
  easily seen from the definition. Suppose that $\sigma^*$ is optimal in
  $\res_1$ and $\res_2$. Let $\res = \lambda \res_1 + ( 1 - \lambda) \res_2$
  with $0 \leq \lambda \leq 1$, then by linearity of $\Hf(\res, \sigma)$ in
  $\res$ and optimality of $\sigma^*$ we have
  \begin{align*}
    \Hf(\res, \sigma^*) 
      &= \lambda \Hf(\res_1, \sigma^*) + (1 - \lambda) \Hf(\res_2, \sigma^*), \\
      &\leq \lambda \Hf(\res_1, \sigma) + (1 - \lambda) \Hf(\res_2, \sigma)
      =\Hf(\res, \sigma).
  \end{align*}
  Hence $\Hf(\res, \sigma^*) \leq \Hf(\res, \sigma)$ and $\sigma^*$ is
  optimal for $\res \in [\res_1, \res_2]$.
\end{proof}
As stated, $\Hf(\res, \sigma)$ is linear in $\res$, and we can rewrite it
slightly to emphasize its linearity 
\begin{align}
  \Hf(\res, \sigma) &= - \sum_{ij} \bigl[A_{ij} - \res\bigr] \samecomm \nonumber \\
                    &= -\bigl[E - \res N\bigr]
\end{align}
where $E := \sum_{c} e_c$ the total of internal edges and $N := \sum_{c} n_c^2$
is the sum of the squared community sizes. 

It is less obvious how to detect whether a partition remains optimal over some
interval. Fortunately, it turns out that $N$ is monotonically decreasing with
$\res$. Specifically, if both partitions are optimal for both resolution
parameters, then necessarily $N_1 = N_2$, and so also $E_1 = E_2$. We therefore
only need to find those points at which $N(\res)$ changes, which can be done
efficiently using bisectioning on $\res$.

\begin{theorem}
Let $\sigma_z = \argmax_\sigma \Hf(\res_z,\sigma)$, $z=1,2$. Furthermore,
let $N_z = \sum_c n^2_c(\sigma_z)$ where $n_c(\sigma_z)$ denote the
community sizes of the partition $\sigma_z$. If $\res_1 < \res_2$
then $N_1 \geq N_2$.
\end{theorem}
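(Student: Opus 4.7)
The plan is a standard swap-and-add argument exploiting the linearity of $\Hf$ in $\res$. I would use the optimality of $\sigma_1$ at $\res_1$ together with the optimality of $\sigma_2$ at $\res_2$ to produce two inequalities relating the quantities $E_z$ and $N_z$ of the two partitions, and then add them so that the edge-count terms cancel.

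First I would invoke the linear form $\Hf(\res,\sigma) = -E(\sigma) + \res\, N(\sigma)$ already derived in the excerpt. Testing $\sigma_2$ as a competitor to $\sigma_1$ at $\res_1$, optimality of $\sigma_1$ gives $-E_1 + \res_1 N_1 \leq -E_2 + \res_1 N_2$. Symmetrically, testing $\sigma_1$ as a competitor to $\sigma_2$ at $\res_2$ yields $-E_2 + \res_2 N_2 \leq -E_1 + \res_2 N_1$.

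Next I would add these two inequalities. The $-E_1$ and $-E_2$ contributions appear once on each side and cancel, leaving $\res_1 N_1 + \res_2 N_2 \leq \res_1 N_2 + \res_2 N_1$. Rearranging gives $(\res_2 - \res_1)(N_1 - N_2) \geq 0$, and since $\res_1 < \res_2$ by assumption, one concludes $N_1 \geq N_2$.

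I do not expect any real obstacle. The only mild subtlety is keeping track of the sign convention, in particular whether an optimal $\sigma$ minimizes or maximizes $\Hf$; the argument is symmetric in this respect, because after adding the two optimality inequalities the $E$-terms vanish and only a clean monotonicity statement in $N$ remains. Conceptually this is the familiar monotone comparative statics observation: as the marginal "price" $\res$ attached to the quantity $N$ rises, the optimally chosen $N$ must weakly fall.
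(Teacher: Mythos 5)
Your proof is correct and is essentially identical to the paper's own argument: the same two optimality inequalities, summed so the edge terms cancel, giving $(\res_2-\res_1)(N_1-N_2)\geq 0$. One small caveat on your closing aside: the argument is \emph{not} symmetric in the min/max convention --- flipping both optimality inequalities reverses the final conclusion to $N_1 \leq N_2$ --- but you adopted the cost-minimization convention that the paper's proof actually uses (despite the $\argmax$ in the statement), so your derivation stands as written.
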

\begin{proof}
The two partitions $\sigma_1$ and $\sigma_2$ have the costs $\Hf(\res_1,
\sigma_1) = -E_1 + \res_1 N_1$, $\Hf(\res_2, \sigma_2) = -E_2 + \res_2 N_2$.
Both partitions are optimal for the corresponding resolution
parameters and we obtain
\begin{align*}
  -E_1 + \res_1 N_1 &\leq -E_2 + \res_1 N_2, \\
  -E_2 + \res_2 N_2 &\leq -E_1 + \res_2 N_1.
\end{align*}
Summing both inequalities yields
\begin{equation*}
  -(E_1 + E_2) + \res_1 N_1 + \res_2 N_2 \leq -(E_1 + E_2) + \res_1 N_2 + \res_2 N_1
\end{equation*}
and so $\res_1 (N_1 - N_2) \leq \res_2 (N_1 - N_2)$. Since $\res_1 < \res_2$
we obtain that $N_1 \geq N_2$. 
\end{proof}

\subsection{Significance}
\label{sec:significance}

\begin{figure}
\begin{center}
  \includegraphics{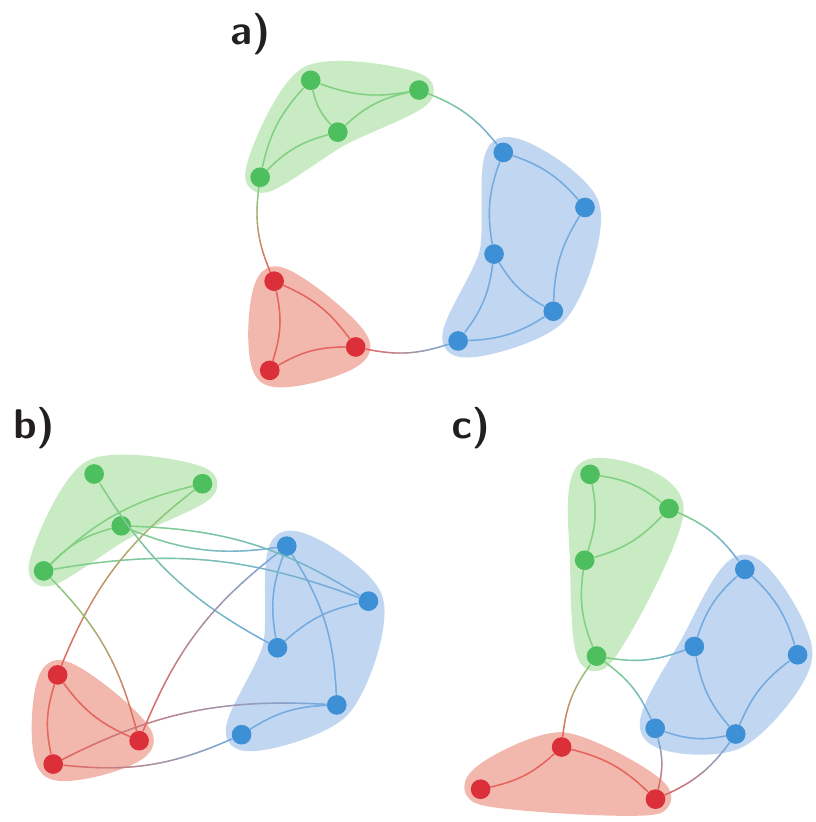}
\end{center}
\caption{Probabilities for partitions. Consider the example partition provided
  in (a). The objective is to somehow estimate how (un)likely such a partition
  occurs in a random graph---the significance of a partition. In (b) and (c) we
  show the same graph, but in (b) the same partition as in (a) is used, while in
  (c) a partition with more internal edges is used. For illustrative purposes,
  the graph is generated by randomly rewiring some of the edges and permuting
  the nodes of the original graph in (a). Earlier approaches keep the partition
  fixed, and focus on the probability that so many edges fall within the given
  partition, as illustrated in (b). Yet this ignores there might exist some
  partition within this graph that has more internal edges. Therefore, we focus
  on the probability of \emph{finding} such a dense partition in random graphs,
as illustrated in (c).}
  \label{fig:sig_diff}
\end{figure}

Another, complementary, point of view would be to have some quality measure to
state at what resolution $\res$ the partition is ``good''. After some
reflection, it is ironic we return to the question of what resolution yields a
good partition.  After all, the initial goal of modularity was in fact to decide
on some resolution level: where to cut a particular
dendrogram~\cite{Newman2004Finding}.

Although modularity compares the number of edges within a community to a random
graph, this does not provide any ``significance'' of a partition, since random
graphs and sparse graphs without community structure can also have quite high
modularity~\cite{Reichardt2006,Bagrow2012a,Montgolfier2011}. Other approaches
have been suggested that try to estimate in some way the significance of a
partition. One recent approach, known as ``surprise'', focuses on the
probability to find $E$ internal edges in a random
graph~\cite{Aldecoa2011,Aldecoa2013}. Another more ``local'' approach keeps the
degrees constant and asks what the probability is to connect so many edges to a
given community~\cite{Lancichinetti2010}, which led to a method known as
OSLOM~\cite{Lancichinetti2011a}. A third approach focuses on the likelihood of
generating a graph given a certain partition and degree
distribution~\cite{Karrer2011}, known as stochastic block models.

But when thinking about the significance of a partition, most methods go about
it the wrong way
around~\cite{Lancichinetti2010,Lancichinetti2011a,Karrer2011,Aldecoa2011,Aldecoa2013}.
We do \emph{not} want to know the probability a ``fixed'' partition contains at
least $E$ internal edges, but whether a partition with at least $E$ internal
edges \emph{can be found} in a random graph, which is the approach we will take
in this paper. After all, community detection involves searching for some good
partition, so we should focus on the probability of \emph{finding} such a good
partition in a random graph. In a way, the earlier approaches assume the
partition is ``fixed'' and the edges are randomly distributed, whereas we try to
find a partition in a random graph, which can result in quite different
statistics. Stated somewhat differently, earlier approaches ignore that a simple
permutation of nodes still contains the same partition---one only needs to
identify the permutation to uncover the original partition---whereas our
approach does account for that. We illustrate the differences in the two
approaches in Fig.~\ref{fig:sig_diff}

Nonetheless, these earlier approaches might work quite well. For example,
explicitly calculating the probability to find $E$ internal edges, seems to
yield good results~\cite{Aldecoa2011,Aldecoa2013}. Obviously, the two
probabilities---surprise and our approach---are not completely independent. If
the probability of finding many edges within a partition is high then surely
finding a partition with many edges should be easy. On the other hand, if the
probability of finding a dense partition is low, then surely the probability a
partition contains many edges is low as well. In between these two extremes is a
grey area, and a more in-depth analysis is required for understanding it
exactly. 

Although exact results for finding a partition in a random graph are hard to
obtain, we do get some interesting asymptotic results. The asymptotic limit we
analyse concerns the probability to find a partition into a fixed number of
communities with a certain density for $n \to \infty$ in a random graph. The
probability for finding a certain partition can be reduced to finding some dense
subgraphs in a random graph. We consider subgraphs of size proportional to $n$,
so that it is of size $s n$, with $0 < s < 1$ of a fixed density $q$. Our
central result concerning these subgraph probabilities is the following (the
proof can be found in section~\ref{sec:subgraph_prob_proof}). We here use the
asymptotic notation $f = \Theta(g)$ for denoting $g$ is an asymptotic upper and
lower bound for $f$.

\begin{theorem}
The probability that a subgraph of size $n_c$ and density $q$ appears in a
random graph of size $n$ and density $p$ is asymptotically
\begin{equation}
  \Pr(S(n_c, q) \subseteq \G(n,p)) = e^{\Theta\left(-\binom{n_c}{2} D(q \midd p)\right)}
\end{equation}
where $D(q \midd p)$ is the Kullback-Leibler divergence~\cite{Cover2012}
\begin{equation}
  D(q \midd p) = q \log \frac{q}{p} + (1 - q) \log \frac{1 - q}{1 -p}.
  \label{equ:KL}
\end{equation}
\end{theorem}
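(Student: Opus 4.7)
The plan is to establish matching exponential-order upper and lower bounds by combining a sharp large-deviation estimate for a single vertex subset with a union bound over all subsets. The key observation is that, since $n_c = sn$ with $s \in (0,1)$ fixed, one has $\log \binom{n}{n_c} = O(n)$ while $\binom{n_c}{2} D(q \midd p) = \Theta(n^2)$ (assuming $q \neq p$), so the union-bound overhead is absorbed inside the $\Theta$ in the exponent and both bounds collapse to the same order.

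First I would fix a subset $V' \subseteq V$ with $|V'| = n_c$ and consider the number of edges of $\G(n,p)$ internal to $V'$; this is distributed as $\mathrm{Bin}(N,p)$ with $N = \binom{n_c}{2}$. The event that $V'$ induces a subgraph of density at least $q$ is exactly $\{\mathrm{Bin}(N,p) \geq qN\}$. Applying Stirling's formula to the dominant term $\binom{N}{\lceil qN \rceil} p^{\lceil qN \rceil}(1-p)^{N - \lceil qN \rceil}$ yields the standard large-deviation asymptotic
\begin{equation*}
  \Pr\!\left(\mathrm{Bin}(N,p) \geq qN\right) = e^{-N D(q \midd p)\,(1 + o(1))},
\end{equation*}
valid for $q > p$ (the case $q \leq p$ being typical, with $D(q \midd p) = 0$ if $q = p$, and handled by the symmetric lower-tail estimate otherwise). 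The upper-bound half is the usual Chernoff bound; the lower-bound half comes from retaining the single maximal binomial term and invoking Stirling.

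Given this single-subset estimate, the lower bound on $\Pr(S(n_c, q) \subseteq \G(n,p))$ follows at once by restricting attention to one arbitrary vertex subset of size $n_c$. For the upper bound I would union-bound over all $\binom{n}{n_c}$ candidate subsets, giving
\begin{equation*}
  \Pr(S(n_c, q) \subseteq \G(n,p)) \leq \binom{n}{n_c}\, e^{-\binom{n_c}{2} D(q \midd p)\,(1+o(1))}.
\end{equation*}
Taking logarithms and using $\log \binom{n}{n_c} \leq n_c \log(en/n_c) = O(n) = o(n^2)$, both the lower and upper estimates reduce to $-\Theta\bigl(\binom{n_c}{2} D(q \midd p)\bigr)$ in the exponent, establishing the claim.

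The main obstacle is the matching lower bound in the single-subset binomial tail estimate: the Chernoff upper bound is textbook, but one must be careful to extract a sufficiently sharp lower bound (via Stirling applied to the maximal term, checking that the polynomial prefactors and Stirling correction terms are $o\bigl(N D(q \midd p)\bigr) = o(n^2)$) to guarantee the overall $\Theta$, rather than merely an $\Oh$, in the exponent. Everything else is routine bookkeeping.
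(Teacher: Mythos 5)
Your proof is correct, but it takes a genuinely different and considerably more economical route than the paper's. The paper sets $X = \sum_H X_H$ (the count of $n_c$-vertex subsets inducing the prescribed number of edges) and sandwiches $\Pr(X>0)$ by the second moment method, $\E(X)^2/\E(X^2) \leq \Pr(X>0) \leq \E(X)$. Its upper bound, $\E(X) = \binom{n}{n_c}\, r$, is literally your union bound; the difference is entirely in the lower bound, where the paper undertakes a lengthy computation of $\E(X^2)$ decomposed over vertex overlaps $u = \alpha s n$, followed by Jensen's and Hoeffding's inequalities to show the overlap terms do not spoil the exponent. You bypass all of this by observing that a single fixed vertex subset already realizes the full exponential order, $\Pr(X_{H_0}=1) = e^{-\binom{n_c}{2} D(q \midd p)(1+o(1))}$ by Stirling applied to the maximal binomial term, and that the union-bound overhead $\log \binom{n}{n_c} = O(n)$ is $o(n^2)$, hence invisible at the scale of $\binom{n_c}{2} D(q \midd p) = \Theta(n^2)$; since a single sub-event trivially lower-bounds the union, this suffices, and you in fact obtain the sharper conclusion $\log \Pr = -\binom{n_c}{2} D(q \midd p)(1+o(1))$ rather than merely a two-sided $\Theta$ in the exponent. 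What the second moment method buys the paper---and what your argument does not attempt---is the possibility of showing $\Pr(X>0)$ is bounded away from $0$ (indeed tends to $1$) in the regime $nH(s) > \binom{n_c}{2} D(q \midd p)$, which is the fact underlying the paper's later heuristics about the transition at $\res \approx p$ and the $n \log n$ scaling of significance in random graphs; but none of that is needed for the theorem as stated. Your handling of the minor mismatch between ``density exactly $q$'' and ``density at least $q$'' is also fine, as both events share the same large-deviation exponent.
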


For each $p\neq q$ the probability decays as a Gaussian, with a rate depending
on the ``distance'' between $p$ and $q$ as expressed by the Kullback-Leibler
divergence. Furthermore, the larger the subgraph the less likely a subgraph of
different density than $p$ can be found. Combining these probabilities we arrive
at the following approximation for the probability for a partition to be
contained in a random graph
\begin{equation}
\Pr(\sigma) = \prod_c \exp \left( - \binom{n_c}{2} D(p_c \midd p) \right)
\label{equ:prob_partition}
\end{equation}
where $p_c$ is the density of community $c$. We define the significance then
as\footnote{For directed graphs, it is more appropriate to use $\sum_c n_c(n_c
- 1) D(p_c \midd p)$.}
\begin{equation}
\Sig(\sigma)  = - \log \Pr(\sigma) = \sum_c \binom{n_c}{2} D(p_c \midd p).
\end{equation}
Notice that for the two trivial partitions of (1) all nodes in a single
community ($\res = 0$) or (2) each node in its own community ($\res = 1$), the
significance is zero (assuming no self-loops). Since the significance is
non-negative (because the Kullback-Leibler divergence is non-negative), there
will most likely be some partition in between these two extremes ($0 < \res <
1$) which yields a non-zero significance.

\subsubsection{Encoding gain}

Notice that the Kullback-Leibler divergence can be interpreted as a kind of
entropy difference. It can be written as
\begin{equation}
D(q \midd p) = H(q,p) - H(q)
\end{equation}
where $H(q)$ is the binary entropy and $H(q,p)$ is the cross entropy
\begin{align}
H(q) &= -q \log q - (1-q) \log (1-q), \\
H(q,p) &= -q \log p - (1-q) \log (1-p).
\end{align}
Hence, it measures the difference in entropy between $p$ and $q$, assuming that
$q$ is the ``correct'' probability. 

This points to a possible interpretation of the significance $\Sig(\sigma)$ in
terms of encoding of the graph. Suppose we are requested to compress the graph
$G$, and we do so using the simplest possible framework: for each possible edge
we indicate whether it is present or not. Using the average graph density $p$,
by Shannon's source coding theorem~\cite{Cover2012}, the optimal code lengths
are $-\log p$ for indicating an edge is present and $-\log (1-p)$ for indicating
an edge is absent. Now suppose that for some community we have the actual
density $q$. The expected code length using the average graph density is then
$H(q,p)$. If we use the actual graph density $q$ however, we obtain an expected
code length of $H(q)$. The gain in coding efficiency by using $q$ instead of $p$
is then $D(q \midd p)$. Doing so for all $\binom{n_c}{2}$ possible edges, and
for all communities then yields the significance (we hence don't count the
external edges). Significance can thus be regarded as the gain in encoding a
graph by making use of a partition.

\subsubsection{Using significance}

There are two ways to use significance. Firstly, we could use significance to
select a particular resolution parameter $\res$. As was made clear in
section~\ref{sec:scan_bisect}, we don't have to scan $\res \in [\res_1, \res_2]$
if $N(\res_1) = N(\res_2)$. If in addition we are only interested in the $\res$
for which $\Sig(\sigma)$ is maximal, we can only scan those ranges for which the
significance is maximal (taking a greedy approach), similar to root-finding
bisectioning.

Secondly, we could optimise significance itself. We use an approach similar to
the Louvain method~\cite{Blondel2008} for optimizing significance (see
section~\ref{sec:optimize_sig}). Notice that using significance as an objective
function is not resolution limit free, contrary to CPM~\cite{Traag2011}. After
all, given a partition and a graph, pick a subgraph that consists of only a
single community. Then the significance $\Sig(\sigma)$ of that partition,
defined on the subgraph equals $0$, since $D(p_c \midd p) = 0$. Since this
constitutes the minimum, it is unlikely that no other partition provides a
higher significance. Hence, the same partition no longer (necessarily) remains
optimal on all community induced subgraphs, and the method is hence not
resolution limit free.

\subsection{Resolution Profile}

\begin{figure*}
\begin{center}
  \includegraphics{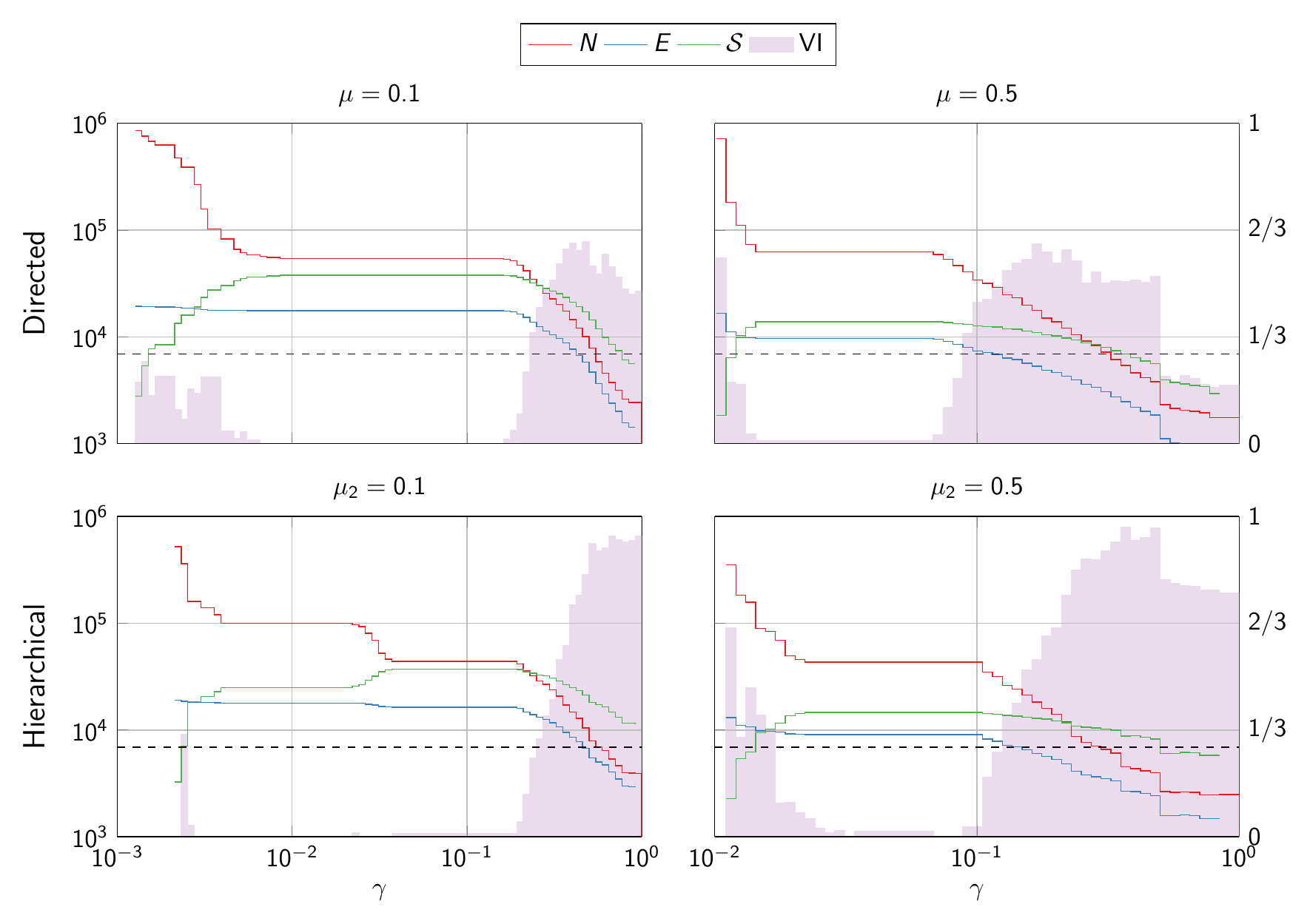}
\end{center}
\caption{Scanning results for directed and hierarchical benchmark graphs. We
display the squared community sizes $N = \sum_c n_c^2$, the total internal
edges $E = \sum_c e_c$ and the significance $\mathcal{S} = \sum_c
\binom{n_c}{2} D(p_c \midd p)$ of each partition (all on a logarithmic axis on
the left). The VI (on a linear axis on the right) is calculated over the various
results returned by running a stochastic algorithm. If the VI is low, this
indicates that the partitions found by the algorithm are (almost) the same. The
black dashed line indicates the expected (maximal) significance of an equivalent
random graph, which is estimated to be about $n \log n \approx 6\,908$.}
\label{fig:scan_res_sig}
\end{figure*}

Scanning the resolution parameters using bisectioning seems to work quite well
on LFR benchmark networks~\cite{Lancichinetti2008a}, as displayed in
Fig.~\ref{fig:scan_res_sig}. These benchmark networks have $n=10^3$ nodes and
have an average degree $\langle k \rangle = 20$ with a maximum degree of $\Delta
= 50$, and follow a power-law distribution $k^{\tau_k}$ with $\tau_k = 2$. The
community sizes range between $20$ and $100$, and are distributed according to
$n_c^{\tau_c}$ with $\tau_c = 1$. This corresponds to the settings as used for
comparing several algorithms~\cite{Lancichinetti2009}. The proportion of
internal links can be controlled by a so-called mixing parameter $0 \leq \mu
\leq 1$, so that for $\mu = 0$ communities are easily detectable, whereas this
becomes increasingly difficult for higher $\mu$. For the hierarchical benchmark
the mixing parameters $\mu_1$ controls the coarser level and $\mu_2$ controls
the finer level. For more details, we refer to Lancichinetti, Fortunato \&
Radicchi~\cite{Lancichinetti2008a}.

\begin{figure*}
\begin{center}
  \includegraphics{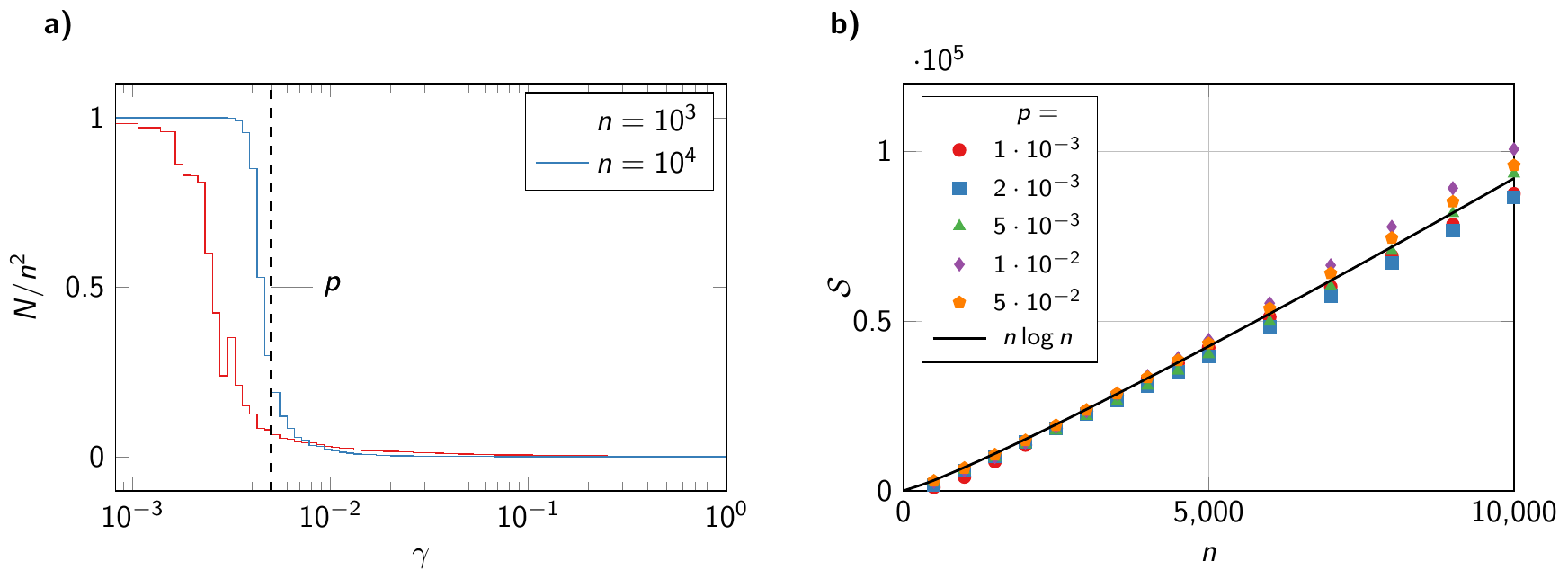}
\end{center}
\caption{Results for ER graphs. In (a) we show that there is a transition around
  $\res = p$ the density of the graph. This transition can be explained by the
  subgraph probabilities calculated in this paper, which suggest that
  asymptotically, a random graph only contains subgraphs  of about the same
density (of size proportional to $n$). In (b) we show the significance of random
graphs, which seems to scale approximately with $n \log n$.}
\label{fig:ER}
\end{figure*}

From Fig.~\ref{fig:scan_res_sig} it is quite clear that both $N$ and $E$ are
stepwise decreasing functions of $\res$. The plateaus indeed correspond to the
planted partition for the benchmark network. The ``stability'' of a partition is
reported in terms of the average pairwise variation of information (VI) between
the various results of multiple runs of the algorithm. The VI measure can be
interpreted as a distance between partitions~\cite{Meila2007}, so a low value
indicates the results are relatively stable. Indeed, in the range of the
plateau, the VI is relatively low (near $0$), indicating the partition is
relatively stable. Hence, using such heuristics, it seems possible to scan for
``stable'' plateaus of resolution values. Moreover, significance is highest in
the region of the plateaus, and thus seems to be able to point to ``meaningful''
resolutions for these networks.

For hierarchical LFR benchmark graphs~\cite{Lancichinetti2008a} results are
similar (Fig.~\ref{fig:scan_res_sig}). This network has $n=10^3$ nodes, and each
node has a degree of $k_i = k = 20$. It consists of $10$ large communities of
$100$ nodes each, and each large community is composed of $5$ smaller
communities of $20$ nodes each. We observe two plateaus for $\mu_2 = 0.1$ (we
have used $\mu_1 = 0.1$ for both results), corresponding to the two levels of
the hierarchy. For these plateaus the VI is near zero, indicating quite stable
results. For $\mu_2 = 0.5$ the two plateaus have merged into a single plateau,
The smaller communities are more significant for $\mu_2 = 0.1$. This makes
sense, since the smaller communities are quite well defined for this regime,
while the larger communities are less clearly defined. Interestingly, when the
two plateaus merge for $\mu_2 = 0.5$, the significance is lower than for $\mu_2
= 0.1$. Indeed, the communities are less clearly defined for $\mu_2 = 0.5$ than
for $\mu_2 = 0.1$. Again, this makes sense, as the smaller communities are much
less clearly defined, while most links still fall within the larger community
(since $\mu_1 = 0.1$).

\subsection{ER graphs}
\label{sec:ER}

Applying the same technique as in the previous subsection to ER graphs, we
obtain a resolution profile, which shows a particular transition
(Fig.~\ref{fig:ER}a). This transition can be explained by the asymptotics of
significance. As the graph grows, and $n \to \infty$, the probability in
\eqtxt~\eqref{equ:prob_partition} $\Pr(\sigma) \to 0$ for $p_c \neq p$. This
indicates that it becomes increasingly difficult to find (relatively large)
subgraphs of a density different from $p$, and in the limit we expect only to
find subgraphs of about density $p$. For $\res < p$ we then expect to find one
large community, while for $\res > p$ we expect to obtain each node in its own
community, thereby explaining the transition around $\res^* \approx p$. The
asymptotic analysis ignores the fact that the number of communities may grow
with the number of nodes. Therefore, it misses the fact that small communities
may have a density of $p_c > p$, which explains the somewhat slower increase of
number of communities for $\res > p$.

Analysing how significance behaves in ER graphs provides us with a baseline to
compare to observed significance values. Obviously, the maximum significance
scales with the size of the graph. In particular, it seems to scale as $n \log
n$ (Fig.~\ref{fig:ER}b). Compared to the benchmark graphs
(Fig.~\ref{fig:scan_res_sig}), the significance found in random graphs is
rather low, so that significance shows little to no sign of any community
structure in ER graphs (although there will be a non-trivial partition obtaining
this maximum significance). By comparing the observed significance in any graph
to $n \log n$, one is thus able to asses to what extent the observed community
structure is significant. We believe this represents a first step towards a
fully fledged hypothesis testing of the significance of community structure.

\subsection{Maximizing significance benchmarks}

\begin{figure*}
\begin{center}
  \includegraphics{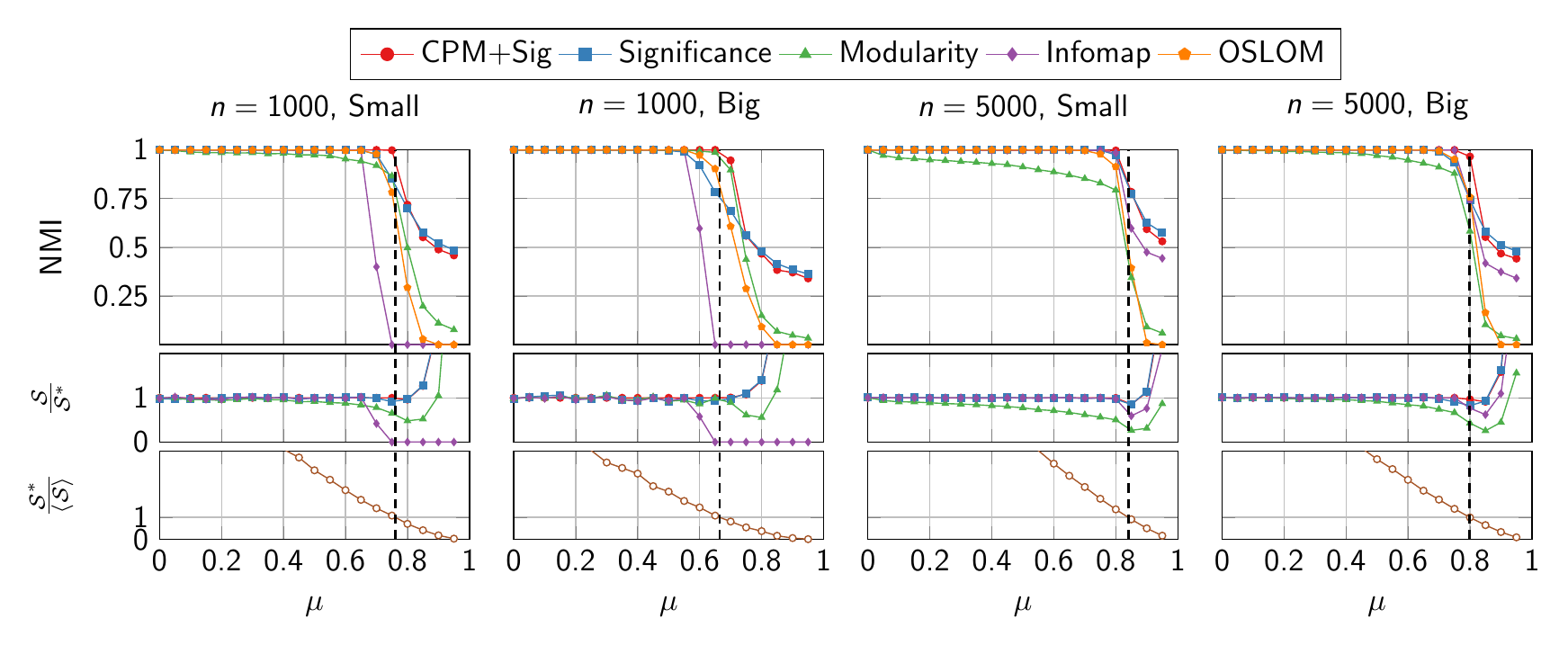}
\end{center}
\caption{Benchmark results for significance. Finding the optimal resolution
  value for CPM using significance seems to work best (first row), where an NMI
  of $1$ indicates the algorithm uncovers exactly the planted partition (OSLOM
  might return overlapping communities, we used an adjusted NMI for
  that~\cite{Lancichinetti2009b}, but it is still $1$ if it is correct).
  Optimizing significance itself also works rather well. We tested two different
  community size distributions: the small communities are between $10$ and $50$
  nodes, and the big communities between $20$ and $100$ nodes. The resolution
  limit is clearly visible for modularity, which shows especially for small
  groups in large networks. Whenever the significance found by each method
  $\mathcal{S}$ is higher than the significance of the planted partition
  $\mathcal{S}^*$, the planted partition is no longer the optimal partition from
  the significance point of view (second row). If the significance of the
  planted partition $\mathcal{S}^*$ is lower than the significance of an
  equivalent random graph $\langle \mathcal{S} \rangle$ no methods seems able to
  correctly detect the planted partition (third row).}
\label{fig:benchmark}
\end{figure*}

We have tested two methods: (1) using significance to choose a $\res$ in
CPM; and (2) optimizing significance itself. We used the standard LFR benchmark,
with the same parameters as for Fig.~\ref{fig:scan_res_sig} for the ``big''
communities, while the ``small'' communities range from 10  to 50, for both
$n=1000$ and $n=5000$. The results are displayed in Fig.~\ref{fig:benchmark}.
We measure the performance using the normalized mutual information
(NMI)~\cite{Lancichinetti2009}, with $\mathrm{NMI} = 1$ indicating the method
uncovered the planted partition exactly. It is clear that using significance to
scan for the best $\res$ parameter for CPM works quite well. Surprisingly
however, optimizing significance itself results in a slightly worse performance
than scanning for the optimal $\res$ parameter for CPM for some settings. This
is presumably due to some local minima in which the significance optimization
gets stuck, while this is not the case for CPM.  Nonetheless, optimizing
significance works quite well, and seems to outperform
Infomap~\cite{Rosvall2011,Rosvall2008}, which was previously shown to perform
well~\cite{Lancichinetti2009}. The OSLOM method performs relatively well,
although not as well as using significance to scan for the best $\res$ parameter
for CPM. This method is aimed at overlapping communities, so an adjusted
NMI~\cite{Lancichinetti2009b} was used to account for that, which still equals
$1$ if it uncovers the planted partition exactly. No results for the
significance are provided, since there is no adjusted version of this measure
(yet).  Modularity clearly shows signs of the resolution
limit~\cite{Fortunato2007a}, as it has difficulties detecting smaller
communities in relatively large networks.  In general, all methods have a
similar computational complexity and use (variants of) the Louvain
method~\cite{Blondel2008}. Detecting the optimal resolution value $\res$ for CPM
involves running the Louvain method multiple times which obviously takes more
time.

Calculating the significance for the planted partition $\mathcal{S}^*$, we see
that in general whenever a method correctly finds the planted communities (i.e.
$\mathrm{NMI} = 1$), that the significance of the partition found by that
algorithm is equivalent, so that $\mathcal{S}=\mathcal{S}^*$ (second row of
Fig.~\ref{fig:benchmark}). We observe a decrease in significance for increasing
$\mu$, as expected (third row of Fig.~\ref{fig:benchmark}). At the point where
the significance of the planted partition goes below the significance of an
equivalent random graph, $\mathcal{S}^* < \langle \mathcal{S} \rangle$, no
method seems able to correctly detect the communities. This suggests that
significance accurately captures whether there is some partition present in the
network or not. Before this point, whenever a method is unable to detect the
planted communities, the significance of that ``incorrect'' partition is lower
than that of the planted partition, $\mathcal{S} < \mathcal{S}^*$, indicating
that the planted partition is of maximal significance.

\subsection{European Parliament}

\begin{figure*}
  \begin{center}
    \includegraphics{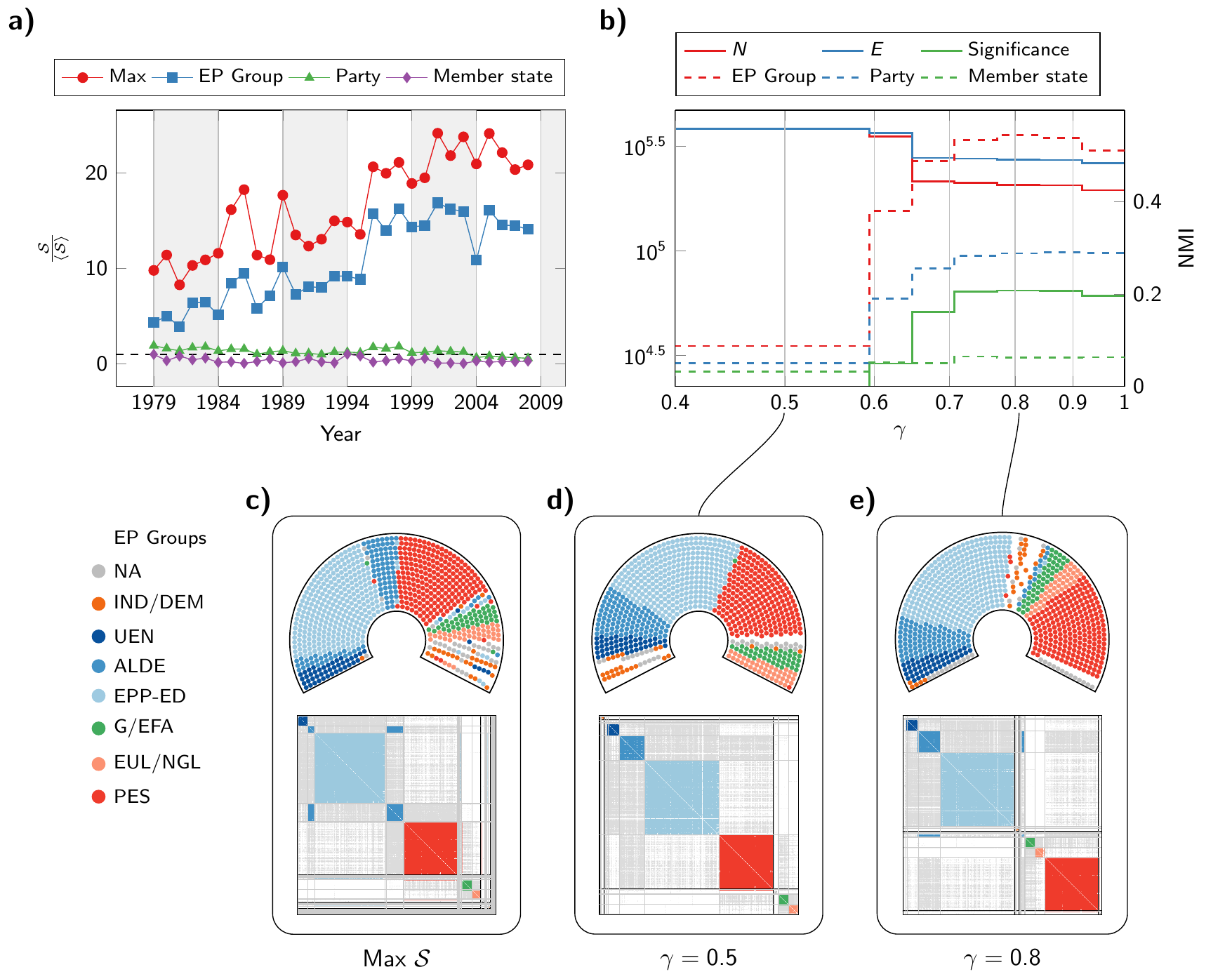}
  \end{center}
  \caption{Results for the European Parliament (EP). In (a) we show the
    significance of four different possible partitions throughout time: the
    partition that maximizes significance and partitions based on the
    affiliation of each member of parliament (MEP) to an EP group, a national
    party or a member state. In (b) we show the resolution profile for the sixth
    EP in the parliamentary year 2008 (June 13, 2008---June 12, 2009).  Besides
    the other quantities, we also show the similarity as measured by the NMI to
    partitions based on the EP groups, the national parties and the member
    states. In (c), (d) and (e) we show how such a partition looks like at
    maximum significance and at two different resolution values $\res=0.5$ and
    $\res=0.8$ respectively. The latter corresponds to the partition that
    maximizes significance. The top shows the division in parliament, while the
    bottom shows the adjacency matrix ordered the same as the parliament. The
    division in communities is indicated by the grouping of the seats in
    parliament and the black lines in the adjacency matrix, while the EP groups
    are indicated by colour. For a key to the abbreviations of the EP parties,
    we refer to the main text.}
  \label{fig:MEP}
\end{figure*}

We demonstrate the method on networks of the European Parliament (EP) from
1979--2009, where each vote of a member of parliament (MEP) for or against a
certain proposal is recorded, the so-called roll call votes (these do not
constitute all votes in the EP though), similar to an analysis of the U.S.
Senate~\cite{Mucha2010}. Over this whole period, a total of almost $16$ million
votes were cast, by in total a little over $2\,500$ different MEPs for more than
$21\,000$ issues. For each parliamentary year (roughly from mid-June to mid-June
the next year), we constructed a network, where there is a link between two MEPs
whenever they vote more in accord than average. We only take into account votes
whenever both MEPs cast a yea or nay vote (instead of abstaining, not voting or
being absent). We used data from
\href{http://personal.lse.ac.uk/hix/HixNouryRolandEPdata.htm}{Simon
Hix}~\cite{Hix2007}.

The MEPs are elected for a five year period from national member states, and
each MEP is associated to a national party. In total we can discern 169 national
parties over the whole period, but usually parties and MEPs organise themselves
in political groups (EP groups) that correspond to some ideological views,
ranging from liberalism to socialism and from conservatives to progressives. Not
all MEPs organise themselves in EP groups; these are known as Non-Attached (NA)
members. Although the EP has the power to choose the European Commission (not
per individual commissioner, but as a whole), they do not need to organise
themselves in governing parties and opposition. Nonetheless, various coalitions
are formed, and from time to time the largest groups have collaborated in a
grand coalition of sorts~\cite{Kreppel2003}. In short, we can create a partition
on three different aspects of the MEPs: (1) their EP group; (2) their national
party; and (3) their member state. In addition, we obtain the partition that
maximizes significance.

We show the normalized significance (i.e. normalized by $\langle S \rangle
\approx n \log n$) for the four different possible partitions in
Fig.~\ref{fig:MEP} from 1979 (the first EP) to 2008 (the sixth EP). Given the
(sub) national constituencies of elected MEPs, one particular concern is that
the EP is governed by national interests, rather than some common European
interest. Our results clearly show that neither a partition based on national
party nor on a partition based on member states is significant. To be clear,
this does not imply that MEPs of the same national party do not vote similarly
(because they do), rather, it means they vote highly similar to MEPs of other
parties.  For member states however, the division seems to run across member
states, and MEPs of the same member state do not necessarily vote in a similar
fashion. This shows that in general MEPs do not vote along national lines,
although for certain votes the national background may play a
role~\cite{Noury2002,Hix2009}.

The partition in EP groups shows 5 to 15 times the significance of a random
graph, making it quite significant. Whereas the partition into member states and
national parties remains almost constant throughout time, the partition into EP
groups increases quite a lot from 1979 to 2008, with an all time low of
$\frac{\mathcal{S}}{\langle \mathcal{S} \rangle} \approx 3.9$ in 1981 and
reaching its maximum of $\frac{\mathcal{S}}{\langle \mathcal{S} \rangle} \approx
16.8$ in 2001, an increase of more than $400\%$.  One possible explanation of
the general increase in divisiveness is that the EP has become more powerful
over the years, so that competition over important issues have taken a
lead~\cite{Noury2002,Kreppel2003,Hix2005}. Besides a general trend upwards,
there seems to be a particularly large jump between 1995 and 1996. One possible
explanation is that Austria, Finland and Sweden entered the European Union in
1995, whereafter MEPs were elected to parliament in 1995 and 1996.  On the other
hand, the accession of Eastern European countries in 2004 and Eastern Balkan
countries in 2007 did not seem to increase the divisiveness.  The maximum
significance closely follows the same trend as the EP group partition,
suggesting the two are related.

We have also analysed the sixth parliament for the year 2008, using CPM and
significance, to see what scales of community structure are present. We show
results for $\res=0.5$ and $\res=0.8$, with the latter corresponding to the
maximal significance for CPM. Clearly, the communities have a quite high
internal density, and are quite strongly connected amongst each other, as is
also clear from the adjacency matrices displayed in Fig.~\ref{fig:MEP}. 

At $\res=0.5$ CPM groups together the \emph{Greens/European Free Alliance}
(G/EFA) and the \emph{European United Left/Nordic Green Left} (EUL/NGL), which
are both left wing environmental parties. The \emph{Party of European
Socialists} (PES), joins the two other leftist parties at a somewhat higher
resolution of $\res=0.8$. The more conservative parties of the \emph{Union for
Europe of the Nations} (UEN) and the \emph{Alliance of Liberals and Democrats
for Europe} (ALDE) seem to join forces with the more centric \emph{European
People's Party-European Democrats} (EPP-ED). The eurosceptic
\emph{Independence/Democrats} (IND/DEM) group divides itself between the
right-wing and the left-wing bloc, although some members constitute a separate
bloc with other \emph{Non-Attached} (NA) MEPs, who themselves also split across
the two large blocs. The partition maximizing significance is different still,
but shows a similar grouping of EP groups, in addition to several smaller
communities. Surprisingly however, a part of UEN is joined with PES, although
they seem ideologically more remote.

These three different partitions highlight different aspects of the voting
network. The partition maximizing significance for CPM (at $\res=0.8$) seems to
highlight a more or less traditional partition into left and right wing
politics~\cite{Kreppel2003}. The partition for $\res=0.5$ seems to reveal a
grand coalition~\cite{Kreppel2003}, with mainly the green-left differing from
the rest. The partitions maximizing significance itself seems to highlight some
interesting split of the UEN. In conclusion, the EP shows signs of multiple
possible partitions, and significance seems to point to some interesting
partitions.

\section{Discussion}
\label{sec:Discussion}

We have presented in this paper a method to find significant scales in community
structure. Firstly, we introduced a bisectioning method allowing a fast and
accurate construction of a resolution profile. Secondly, we suggested a measure
based on subgraph probabilities in order to state what partitions are
significant. This measure can be interpreted as the gain in encoding a graph by
making use of a partition. We showed significance is able to accurately portray
partitions in benchmarks. Additionally, we showed on an empirical example using
voting data of the European Parliament that this measure conveys meaningful
information in that setting. Significance seems to be closely related to the
measure of surprise~\cite{Aldecoa2011,Aldecoa2013} and to stochastic block
models~\cite{Karrer2011}, relationships we hope to explore further in the
future.

We conjectured that the maximum significance $\langle S \rangle \sim n \log n$
for random graphs, which allows researchers to compare the observed significance
to the expected significance. It constitutes a first step towards fully fledged
hypothesis testing of the significance of partitions. Nonetheless, a proof of
this behaviour is lacking so far. Moreover, the standard error needs to be
estimated still, although simulations show it is relatively small. Furthermore,
the significance is currently based on Erd\"os-R\'enyi graphs, but it might be
more realistic to take the degree distribution into account~\cite{Newman2012}.
Significance is not only useful for partitions found using community detection,
but also for partitions based on other node
characteristics~\cite{Bianconi2009a}, such as school grades~\cite{Stehle2011},
gender~\cite{Palchykov2012}, or dormitories~\cite{Traud2011}, similar to what we
did for the European Parliament, and as such we deem it to be a valuable
contribution to analysing partitions in complex networks.

\section{Methods}
\subsection{Subgraph probabilities}
\label{sec:subgraph_prob_proof}

We write $G \in \G(n,p)$ for a random graph $G$ from $\G(n,p)$, such that each
edge has independent probability $p$ of being included in the graph, the usual
Erd\"os-R\'enyi (ER) graphs. We use $|G| := |V(G)| = n$ for
the number of nodes and $\|G\| := |E(G)| = m$ for the number of edges. We use $H
\subseteq G$ to denote the fact that $H$ is an \emph{induced} subgraph of $G$.
We write $\Pr(H \subseteq \G(n,p))$ for the probability that $H$ is an induced
subgraph of a $G \in \G(n,p)$. Let $S(n_c,m_c) = \{ G \mid |G| = n_c, \|G\| =
m_c\}$ denote the set of all graphs with $n_c=|G|$ vertices and $m_c=\|G\|$
edges. Furthermore, we slightly abuse notation and write $\Pr(S(n_c, m_c)
\subseteq \G(n,p))$ for the probability that a graph $G \in \G(n,p)$ contains
one of the graphs in $S(n_c,m_c)$, i.e.
\begin{equation*}
  \Pr(S(n_c, m_c) \subseteq \G(n,p)) = \Pr(\bigcup_{H \in S(n_c, m_c)} H \subseteq \G(n,p)).
\end{equation*}

Let us denote by $X$ the random variable that represents the number of
occurrences of a subgraph with $n_c$ vertices and $m_c$ edges in a random graph.
Let $X_H$ be the indicator value that specifies whether a subgraph $H$ of order
$n_c=|H|$ in the random graph equals one of the graphs in $S(n_c, m_c)$, which
of course comes down to
\begin{equation*}
  X_H = \begin{cases}
    1 & \text{if~} \|H\| = m_c \text{~and~} |H|=n_c \\
   0 & \text{otherwise}
        \end{cases}.
\end{equation*}
We can then write $X = \sum_H X_H$ where the sum runs over all $\binom{n}{n_c}$
possible subgraphs $H$. Obviously then, $\Pr(X > 0) = \Pr(S(n_c, m_c) \subseteq
\G(n,p))$. By Chauchy-Schwarz's inequality $\E(XY)^2 \leq E(X^2)E(Y^2)$ and
Markov's inequality $\Pr(X \geq a) \leq \frac{\E(X)}{a}$ we obtain the following
bounds
\begin{equation}
  \frac{\E(X)^2}{\E(X^2)} \leq \Pr(X > 0) \leq \E(X).
  \label{equ:markov_cauchy}
\end{equation}
This way of estimating probabilities is known as the second moment
method~\cite{Bollobas2001}.

It is convenient to define the probability that a graph of $n_c$ nodes contains
$m_c$ edges
\begin{align}
  r &= \Pr(S(n_c, m_c) \subseteq \G(n_c,p)) \nonumber \\
  &= \binom{\binom{n_c}{2}}{m_c} p^{m_c}(1 - p)^{\binom{n_c}{2} - m_c}.
  \label{equ:prob_simple}
\end{align}

\begin{theorem}
  The expected number of occurrences of an induced subgraph with $n_c$ nodes and
  $m_c$ edges in a random graph with $n$ nodes and density $p$, is given by
  \begin{equation}
    \E(X) = \binom{n}{n_c} r
  \end{equation}
  \label{thm:expected}
\end{theorem}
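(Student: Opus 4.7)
The plan is to prove this by a straightforward application of linearity of expectation, reducing the computation of $\E(X)$ to the calculation of $\Pr(X_H = 1)$ for a single fixed vertex subset $H$ of size $n_c$.

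First I would write $X = \sum_H X_H$, where the sum ranges over all $\binom{n}{n_c}$ subsets of $V$ of size $n_c$, and invoke linearity of expectation to get $\E(X) = \sum_H \E(X_H) = \sum_H \Pr(X_H = 1)$. Since each term depends only on which edges appear among a particular set of $\binom{n_c}{2}$ potential edges, and since these edges are independently present with probability $p$ in the ER model, the probability $\Pr(X_H = 1)$ does not depend on the identity of $H$ — only on its cardinality.

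Next I would compute this common probability. Conditioned on choosing a fixed $H$ with $|H| = n_c$, the indicator $X_H$ equals one precisely when the induced subgraph on $H$ has exactly $m_c$ edges. By independence of the $\binom{n_c}{2}$ potential edges, the number of edges in this induced subgraph is $\mathrm{Binomial}(\binom{n_c}{2}, p)$, so
\begin{equation*}
\Pr(X_H = 1) = \binom{\binom{n_c}{2}}{m_c} p^{m_c} (1-p)^{\binom{n_c}{2} - m_c} = r,
\end{equation*}
which is exactly the quantity $r$ defined in \eqtxt~\eqref{equ:prob_simple}. Summing over the $\binom{n}{n_c}$ choices of $H$ yields $\E(X) = \binom{n}{n_c} r$, as claimed.

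There is no real obstacle here; the only subtle points worth flagging explicitly in the write-up are (i) that $H \subseteq G$ denotes an \emph{induced} subgraph (so we need exactly $m_c$ edges, not at least $m_c$ — this is why a binomial point probability appears rather than a tail), and (ii) that linearity of expectation applies despite the $X_H$ being strongly dependent across overlapping subsets, which is the standard advantage of the first-moment calculation that justifies using it here before moving on to the second-moment bound in \eqtxt~\eqref{equ:markov_cauchy}.
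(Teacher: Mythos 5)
Your proposal is correct and follows essentially the same route as the paper's proof: linearity of expectation over the $\binom{n}{n_c}$ vertex subsets, identifying $\E(X_H)=\Pr(X_H=1)$ with the binomial point probability $r$ from \eqtxt~\eqref{equ:prob_simple}. Your write-up is merely a bit more explicit about why $\Pr(X_H=1)=r$ and why dependence among the $X_H$ is irrelevant at the first-moment stage.
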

\begin{proof}
  By linearity of expectation, we have $\E(X) = \sum_H \E(X_H)$, and because
  $X_H$ is an indicator variable $\E(X_H) = \Pr(X_H = 1)$. Notice that $H$ has
  $n_c$ nodes, so that $H \in \G(n_c, p)$, and $\Pr(X_H = 1) = r$. There are
  $\binom{n}{n_c}$ subgraphs of $n_c$ nodes in a graph with $n$ nodes, which
  concludes the proof.
\end{proof}

For $\E(X^2)$ the idea is to calculate the expected value of the number of pairs
of subgraphs that have $m_c$ edges. We do this by separating in three parts: the
parts of the two subgraphs without overlap, and the part that overlaps.

\begin{theorem}
  The expected squared number of occurrences of an induced subgraph can be
  written as
  \begin{multline}
    \E(X^2) = \E(X) \sum_{u=0}^{n_c} \binom{n_c}{u}\binom{n - n_c}{n_c - u} 
    \\ \sum_{m(\Delta)}^{\min(\binom{u}{2},m_c)} \binom{M(u)}{m_c - m(\Delta)}
        \\ p^{m_c-m(\Delta)}(1 - p)^{M(u) - (m_c - m(\Delta))},
        \label{equ:EX^2}
  \end{multline}
  with $M(u)= \frac{n_c(n_c-1) - u(u-1)}{2}$.
\end{theorem}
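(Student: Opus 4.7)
The natural starting point is the identity
$$\E(X^2) = \sum_{H, H'} \E(X_H X_{H'}) = \sum_{H, H'} \Pr(X_H = 1, X_{H'} = 1),$$
where both sums range over the $\binom{n}{n_c}^2$ ordered pairs of $n_c$-vertex induced subgraph positions. My plan is to classify these pairs by the size $u := |V(H) \cap V(H')|$ of their vertex intersection, which takes values in $\{0, 1, \ldots, n_c\}$. For a fixed $H$, the number of $H'$ with a prescribed $u$ equals $\binom{n_c}{u}\binom{n - n_c}{n_c - u}$, obtained by picking the $u$ shared vertices out of $V(H)$ and the remaining $n_c - u$ out of its complement. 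Because the ER model is vertex-exchangeable, the joint probability depends only on $u$, so the outer sum over $H$ simply contributes a factor of $\binom{n}{n_c}$, which will later combine with the one-subgraph probability $r$ from Eq.~\eqref{equ:prob_simple} to produce $\E(X) = \binom{n}{n_c} r$ out in front.

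For the joint probability at fixed overlap $u$, the key observation is that the relevant edge slots partition into three mutually disjoint groups: the $\binom{u}{2}$ slots inside $V(H) \cap V(H')$, the $M(u) = \binom{n_c}{2} - \binom{u}{2}$ slots of $H$ touching $V(H) \setminus V(H')$, and the symmetric $M(u)$ slots of $H'$ touching $V(H') \setminus V(H)$. Under $\G(n,p)$ these three groups are independent Bernoulli blocks, so conditioning on the number $m_\Delta$ of overlap edges, $H$ requires exactly $m_c - m_\Delta$ additional edges in its own non-overlap slots and $H'$ independently requires $m_c - m_\Delta$ in its non-overlap slots. Each event is binomial, and summing over $m_\Delta$ from $\max(0,\, m_c - M(u))$ up to $\min(\binom{u}{2},\, m_c)$ expresses $\Pr(X_H = 1, X_{H'} = 1)$ as a single sum of triple products.

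The remaining step is the algebraic collection needed to pull $\E(X)$ out in front and leave behind the single binomial factor in $m_c - m_\Delta$ displayed in the theorem. I expect this to be the main obstacle: both the overlap factor and the $H$-side non-overlap factor depend on $m_\Delta$, so neither can literally be pulled out of the sum. The cleanest workaround is to rewrite
$$\Pr(X_H = 1, X_{H'} = 1) = \Pr(X_H = 1)\, \Pr(X_{H'} = 1 \mid X_H = 1),$$
in which the first factor is exactly $r$ and the conditional probability inherits its binomial structure in $m_c - m_\Delta$ from the independence of the $H'$-side slots from everything involving $H$. Multiplying by the outer $\binom{n}{n_c}$, recognising $\binom{n}{n_c} r = \E(X)$ from Theorem~\ref{thm:expected}, and attaching the counting factor $\binom{n_c}{u}\binom{n - n_c}{n_c - u}$ for each overlap size then produces the stated identity; checking the index ranges of the inner sum is a routine bookkeeping step, since the binomial coefficients vanish outside $\max(0, m_c - M(u)) \leq m_\Delta \leq \min(\binom{u}{2}, m_c)$.
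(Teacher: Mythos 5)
Your setup coincides with the paper's: decompose $\E(X^2)=\sum_{H,H'}\Pr(X_H=X_{H'}=1)$, classify ordered pairs by the vertex overlap $u$, count them as $\binom{n}{n_c}\binom{n_c}{u}\binom{n-n_c}{n_c-u}$, split the edge slots into the $\binom{u}{2}$ overlap slots and two blocks of $M(u)=\binom{n_c}{2}-\binom{u}{2}$ slots each, and condition on the number $m(\Delta)$ of overlap edges to get a single sum of triple products. All of that is correct, and your explicit range $\max(0,m_c-M(u))\leq m(\Delta)\leq\min(\binom{u}{2},m_c)$ is tidier than the paper's.

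The gap is in the final step, exactly where you flag ``the main obstacle,'' and your workaround does not close it. Writing $\Pr(X_H=1,X_{H'}=1)=r\,\Pr(X_{H'}=1\mid X_H=1)$ is fine, but $X_{H'}$ depends on the overlap slots, which lie inside $H$: only the $H'$-side non-overlap block is independent of the event $X_H=1$. The conditional factor is therefore the mixture $\sum_{m(\Delta)}\Pr\bigl(\|\Delta\|=m(\Delta)\mid\|H\|=m_c\bigr)\binom{M(u)}{m_c-m(\Delta)}p^{m_c-m(\Delta)}(1-p)^{M(u)-(m_c-m(\Delta))}$, whose mixing weights are the hypergeometric probabilities $\binom{\binom{u}{2}}{m(\Delta)}\binom{M(u)}{m_c-m(\Delta)}/\binom{\binom{n_c}{2}}{m_c}$, not $1$. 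The unweighted inner sum displayed in the theorem is consequently not $\Pr(X_{H'}=1\mid X_H=1)$; comparing coefficients of the linearly independent monomials $p^{2m_c-m(\Delta)}(1-p)^{2\binom{n_c}{2}-\binom{u}{2}-2m_c+m(\Delta)}$ shows the exact second moment and the displayed expression genuinely differ, the latter being only an upper bound (the weights are at most one and sum to one by Vandermonde's identity). You should know that the paper's own proof makes the identical unjustified jump at its ``which leads to'' step, so your derivation, carried out honestly, ends at the weighted formula and thereby exposes that the theorem as displayed holds only as an upper bound on $\E(X^2)$ --- which is, fortunately, all that the subsequent Cauchy--Schwarz lower bound requires.
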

\begin{proof}
The variable $X^2$ can be decomposed into parts $X_H \times X_{H'}$, such that
we need to investigate the probability that both $H$ and $H'$ have $m_c$
edges. So, we can separate this expectancy in parts of partially overlapping
subgraphs, like
\begin{equation}
  \E(X^2) = \sum_u \sum_{|H \cap H'|=u} \Pr(\|H\| = \|H'\| = m_c),
\end{equation}
where $u$ represents the overlap between the different subgraphs. If $H$ and
$H'$ are (edge) independent, so when $u < 1$, the answer is simply given by
$\Pr(X_H = 1)^2$. For $u \geq 1$ the answer is more involved.

So let us consider two subgraphs $H$ and $H'$ such that $|H \cap H'|=u\geq1$.
Let us separate this in three independent parts, the overlap $\Delta = H \cap
H'$, and the remainders $A = H - \Delta$ and $B = H' - \Delta$. Clearly then,
$|\Delta| = u$, and $|A| = |B| = n_c - u$. The probability that
$\|H\|=\|H'\|=m_c$ can then be decomposed in the probability that the sum of
these independent parts sum to $m_c$. The probability that $\|H\|=m_c$ can be
decomposed as
\begin{multline*}
  \Pr(\|H\|=m_c) = \sum_{m(\Delta)} \Pr(\|\Delta\| = m(\Delta)) \\
                              \Pr(\|H\| = m_c~\mid~\|\Delta\| = m(\Delta)).
\end{multline*}
where $m(\Delta)$ signifies the number of edges within $\Delta$. Similarly, we
arrive at the conditional probability for both subgraphs $H$ and $H'$.
However, since we have conditioned exactly on the overlapping part, the two
remaining parts are independent, and we can write
\begin{multline*}
  \Pr(\|H\| = \|H'\| = m_c~\mid~\|\Delta\| = m(\Delta)) = \\
        \Pr(\|H\| = m_c~\mid~\|\Delta\| = m(\Delta))^2.
\end{multline*}
This probability can be calculated and yields
\begin{multline*}
  \Pr(\|H\| = m_c~\mid~\|\Delta\| = m(\Delta)) = \\
  \binom{M(u)}{m_c - m(\Delta)}p^{m_c - m(\Delta)}(1 - p)^{M(u) - (m_c - m(\Delta))},
\end{multline*}
where $M(u)= \frac{n_c(n_c-1) - u(u-1)}{2}$. We then obtain
\begin{multline*}
 \Pr(\|H\| = \|H'\| = m_c) = \sum_{m(\Delta)} \Pr(\|\Delta\| = m(\Delta)) \\
 \binom{M(u)}{m_c - m(\Delta)}^2 p^{2(m_c - m(\Delta))}(1 - p)^{2M(u) - 2(m_c - m(\Delta))}
\end{multline*}
which leads to
\begin{multline*}
  \binom{\binom{n_c}{2}}{m_c} p^{m_c}(1 - p)^{\binom{n_c}{2} - m_c}
  \sum_{m(\Delta)} \binom{M(u)}{m_c - m(\Delta)} \\ 
 p^{m_c-m(\Delta)} (1 - p)^{M(u) - (m_c - m(\Delta))},
\end{multline*}
where $m(\Delta)$ ranges from $0$ to the minimum of $m_c$ and the number of
possible edges $\binom{u}{2}$.

Now counting the number of subgraphs that overlap in $u$ nodes, for each
choice of subgraph $H$, we choose $u$ nodes in $H$, and $n_c - u$ nodes in the
remaining $n - n_c$ nodes. In total, there are then
\begin{equation*}
  C_u = \binom{n}{n_c}\binom{n_c}{u}\binom{n - n_c}{n_c - u} 
\end{equation*}
overlapping subgraphs with $u$ nodes in common. Concluding, we arrive at
\begin{equation*}
  \E(X^2) = \sum_u C_u \Pr(\|H\| = \|H'\| = m_c~\mid~ | H \cap H' | = u).
\end{equation*}
Writing this out, we arrive at \eqtxt~\eqref{equ:EX^2}.  
\end{proof}

We consider subgraphs of size $s n$, with $0 < s < 1$ with fixed density $q$.
For the asymptotic analysis, we can afford to be a bit sloppy with this density,
and consider $(sn)^2$ possible edges in the subgraph of $sn$ nodes, so that $m_c
= q(sn)^2$, and we now denote by $S(s n,q)$ the subgraphs with density $q$
instead of the actual number of edges.

\begin{theorem}
  The probability for a dense subgraph can be bounded below and above
  asymptotically as 
  \begin{equation}
    \Pr(S(s n, q) \subseteq \G(n,p)) = e^{\Theta(-(sn)^2 D(q \midd p))}
  \end{equation}
  where $D(q \midd p)$ is the Kullback-Leibler divergence
  \begin{equation}
    D(q \midd p) = q \log \frac{q}{p} + (1 - q) \log \frac{1 - q}{1 -p}.
  \end{equation}
\end{theorem}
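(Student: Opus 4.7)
The plan is to apply the second moment bounds from \eqref{equ:markov_cauchy} together with standard entropy estimates for central binomial coefficients. The upper bound follows immediately from Markov, while the main effort lies in controlling $\E(X^2)$ to obtain a matching lower bound via Cauchy--Schwarz.

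For the upper bound, $\Pr(X>0) \leq \E(X) = \binom{n}{n_c} r$. Applying Stirling's approximation to the standard identity
\begin{equation*}
  \binom{N}{qN} p^{qN}(1-p)^{(1-q)N} = \exp\bigl(-N\, D(q \midd p) + O(\log N)\bigr),
\end{equation*}
with $N = \binom{n_c}{2}$ and $m_c \approx qN$, gives $r = \exp(-\binom{n_c}{2} D(q \midd p) + O(\log n))$. Since $\binom{n}{n_c} = \exp(O(n \log n))$ is subleading compared to the $\Theta(n^2)$ main term in the exponent, one obtains $\E(X) = \exp(-\Theta((sn)^2 D(q \midd p)))$.

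For the lower bound, I use $\Pr(X>0) \geq \E(X)^2/\E(X^2)$ and estimate $\E(X^2)$ term by term in \eqref{equ:EX^2}. The inner sum over $m(\Delta)$ contains at most $O(n^2)$ terms, each a binomial mass; its maximum occurs near $m(\Delta) = q\binom{u}{2}$ (so that $m_c - m(\Delta) \approx q M(u)$) and evaluates, via the same entropy identity, to $\exp(-M(u) D(q \midd p) + O(\log n))$. Using $M(u) = \binom{n_c}{2} - \binom{u}{2}$ this factors as $\exp(-\binom{n_c}{2} D(q \midd p)) \exp(\binom{u}{2} D(q \midd p))$, reducing the problem to bounding
\begin{equation*}
  \sum_u \binom{n_c}{u}\binom{n-n_c}{n_c-u} \exp\bigl(\binom{u}{2} D(q \midd p)\bigr).
\end{equation*}
The quadratic-in-$u$ exponential here dominates the combinatorial factor $\binom{n_c}{u}\binom{n-n_c}{n_c-u} = \exp(O(n))$, so the sum is governed by the $u = n_c$ diagonal term (where the two copies of $H$ coincide), which contributes $\exp(\binom{n_c}{2} D(q \midd p))$. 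After cancellation, $\E(X^2)/\E(X)$ is polynomial in $n$, hence $\E(X)^2/\E(X^2) \geq \E(X) \cdot n^{-O(1)} = \exp(-\Theta((sn)^2 D(q \midd p)))$, matching the upper bound up to sub-exponential corrections.

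The main technical obstacle is verifying this dominance rigorously: one has to argue uniformly in $u \in [\max(0, 2n_c - n), n_c]$ that the quadratic exponent $\binom{u}{2} D(q \midd p)$ really overwhelms the binomial counting factor. A convenient route is to look at consecutive ratios of the summand and show it is eventually increasing in $u$; for the small-$u$ regime the summand is trivially bounded by $\binom{n}{n_c} = \exp(O(n \log n))$, which is negligible compared to the $u = n_c$ contribution $\exp(\binom{n_c}{2} D(q \midd p)) = \exp(\Theta(n^2))$. Once this monotonicity is in hand, all $O(\log n)$ and $O(n \log n)$ corrections are absorbed into the $\Theta$ notation, and the matching bounds yield the claim.
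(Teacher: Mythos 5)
Your overall strategy is exactly the paper's: Markov for the upper bound, the second--moment (Cauchy--Schwarz) bound for the lower bound, with $\E(X^2)$ decomposed over the overlap $u$ of two candidate subgraphs. The upper bound is fine. The lower bound reaches the right conclusion, but two of your intermediate claims are wrong as stated. First, the inner sum over $m(\Delta)$ is, in the variable $k=m_c-m(\Delta)$, a \emph{partial binomial sum} $\sum_k \binom{M(u)}{k}p^k(1-p)^{M(u)-k}$ over $k\in[\max(0,m_c-\binom{u}{2}),\,m_c]$. Its largest term sits at $k\approx pM(u)$, not at $k\approx qM(u)$; once $u$ exceeds a threshold proportional to $n_c$ (e.g.\ $\binom{u}{2}\gtrsim \frac{p-q}{p}\binom{n_c}{2}$ when $q<p$), the binomial mean enters the admissible range and the inner sum is $\Theta(1)$ rather than $e^{-M(u)D(q \midd p)+O(\log n)}$. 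Consequently your second claim --- that $\E(X^2)/\E(X)$ is polynomial in $n$ --- is not established and is generically false: the intermediate-overlap terms contribute $\binom{n_c}{u}\binom{n-n_c}{n_c-u}\cdot\Theta(1)=e^{\Theta(n\log n)}$.

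Fortunately the damage is confined to scales the $\Theta\bigl((sn)^2\bigr)$ exponent cannot see, and the fix is actually simpler than your argument: the inner sum is a sum of binomial probabilities and hence trivially at most $1$, so by Vandermonde
\begin{equation*}
  \frac{\E(X^2)}{\E(X)} \;\leq\; \sum_{u}\binom{n_c}{u}\binom{n-n_c}{n_c-u} \;=\; \binom{n}{n_c} \;=\; e^{O(n\log n)},
\end{equation*}
which gives $\E(X)^2/\E(X^2)\geq \E(X)\,e^{-O(n\log n)} = e^{-(sn)^2 D(q\midd p)(1+o(1))}$ for fixed $q\neq p$, matching the upper bound. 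So you should either replace the ``maximum term near $qM(u)$'' and ``polynomial'' claims by this crude but correct bound, or restrict your entropy estimate of the inner sum to the small-$u$ regime and handle large $u$ separately. (For what it is worth, the paper's own lower-bound argument, which invokes Hoeffding on the same partial binomial sum, is vulnerable at exactly the same large-overlap regime, so your proposal is no less rigorous than the published proof --- but the gap is real in both.)
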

\begin{proof}
  We prove the asymptotic result by showing that both an upper and a lower bound
  have a similar asymptotic behaviour. The upper and lower bounds are provided
  by Markov's and Cauchy-Schwarz's inequality as stated in
  \eqtxt~\eqref{equ:markov_cauchy}. We will first prove the upper bound. Taking
  logarithms on Stirling's approximation, we obtain that
  \begin{equation*}
    \log \binom{n}{n_c} \sim n H\left(\frac{n_c}{n}\right) = n H(s),
  \end{equation*}
  where $H(p)$ is the binary entropy
  \begin{equation}
    H(p) = -p \log p - (1-p) \log (1 - p).
  \end{equation}
  We apply this to $\E(X) = \tbinom{n}{n_c} r$ with $r$ as in
  \eqtxt~\eqref{equ:prob_simple} and we obtain
  \begin{multline*}
    \log \E(X) \sim nH(s) + (sn)^2H(q) + \\
      \log \left( p^{q(sn)^2} (1-p)^{(1-q)(sn)^2} \right),
  \end{multline*}
  which can be simplified to $\log \E(X) \sim nH(s) - (sn)^2D(q \midd p)$,
  utilising the binary Kullback-Leibler divergence~\cite{Cover2012}
  \begin{equation}
    D(q \midd p) = q \log \frac{q}{p} + (1 - q) \log \frac{1-q}{1-p},
  \end{equation}
  which yields the upper bound by Markov's inequality.

  We need the second moment for the lower bound. This can be rewritten as
  $\E(X)^2 = \E(X) \sum_{u} \sum_{m(\Delta)} f(u, m(\Delta))$, with
  \begin{multline}
    f(u, m(\Delta)) = \binom{n_c}{u}\binom{n - n_c}{n_c - u} \binom{M(u)}{m_c - m(\Delta)} \\
         p^{m_c-m(\Delta)}(1 - p)^{M(u) - (m_c - m(\Delta))}.
  \end{multline}
  By Cauchy-Schwarz inequality, we want that
  \begin{equation*}
    \log \frac{\E(X)^2}{\E(X^2)} = \log \E(X) - \log \sum_{u} \sum_{m(\Delta)} f(u, m(\Delta))
  \end{equation*}
  increases as $-(sn)^2 D(q \midd p)$. We know that by Jensen's inequality we
  have
  \begin{equation}
    \log \frac{\E(X)^2}{\E(X^2)} \geq \log \E(X) - \sum_{u} \log \sum_{m(\Delta)} f(u, m(\Delta))
  \end{equation}
  Using the notation $u = \alpha sn$ we can write
  \begin{multline*}
    f(\alpha, m(\Delta)) = \binom{sn}{\alpha sn} \binom{(1 - s)n}{s(1 - \alpha) n}
    \binom{(1 - \alpha^2)(sn)^2}{m_c - m(\Delta)} \\
     p^{m_c - m(\Delta)} (1 - p)^{(1 - \alpha^2)(sn)^2 - (m_c - m(\Delta))}
  \end{multline*}
  We can bound
  \begin{multline*}
    \sum_{m(\Delta) = 0}^{(sn)^2 \min \alpha^2, q} \binom{(1 - \alpha^2)(sn)^2}{m_c - m(\Delta)} \\
     p^{m_c - m(\Delta)} (1 - p)^{(1 - \alpha^2)(sn)^2 - (m_c - m(\Delta))}
  \end{multline*}
  by
  \begin{equation*}
     \sum_{k = 0}^{q(sn)^2} \binom{(1 - \alpha^2)(sn)^2}{k} \\
     p^{k} (1 - p)^{(1 - \alpha^2)(sn)^2 - k},
  \end{equation*}
  with $k = m_c - m(\Delta)$,  in which we recognize the binomial cumulative
  probability $\Pr( Y \leq m_c )$ where $Y$ are the number of edges in the
  overlapping part. By Hoeffdings inequality this can be bounded by
  \begin{equation}
    \exp\left( - 2 (sn)^2 \frac{(1 - \alpha^2 - q)^2}{1 - \alpha^2} \right).
  \end{equation}
  Combining with our earlier result on $E(X)$, we then have
  \begin{multline*}
    \log \frac{\E(X)^2}{\E(X^2)} \geq -(sn)^2 D(q \midd p) + nH(s) - \\
    \sum_\alpha s n H(\alpha) + (1 - s)n H\left((1 - \alpha)\frac{s}{1 -s}\right) - \\
    \left(2 (sn)^2 \frac{(1 - \alpha^2 - q)^2}{1 - \alpha^2} \right).
  \end{multline*}
  For large enough $n$ the quadratic term dominates, and we obtain $\log
  \frac{\E(X)^2}{\E(X^2)} \geq -(sn)^2 D(q \midd p)$, giving the lower bound. By
  combining the lower and upper bound we obtain the asymptotic result stated in
  the theorem.
\end{proof}

\subsection{Optimizing significance}
\label{sec:optimize_sig}

As is common in the Louvain method~\cite{Blondel2008}, we look at the difference
of moving some node.  However, we also need to aggregate the graph, and still
correctly move communities. For that we need the node size $n_i$, similar as for
CPM~\cite{Traag2011}, which initially is $n_i=1$. Upon aggregating the graph the
node size is set to the sum of the node sizes within a community. Moving node
$i$ from community $r$ to $s$ with size $n_i$, $e_{ir}$ edges to community $r$
and $e_{is}$ edges to community $s$ gives a difference in significance of 
\begin{multline*}
  \Delta \Sig(\sigma) = 
  \binom{n_r}{2} D(q_r \midd p) - \binom{n_r - n_i}{2} D(q_r' \midd p) \\
  - \binom{n_s}{2} D(q_s \midd p) + \binom{n_s + n_i}{2} D(q_s' \midd p),
\end{multline*}
where $q_r' = \frac{m_r - e_{ir}}{ \binom{n_r - n_i}{2} }$
and
$q_s' = \frac{m_s + e_{is}}{ \binom{n_s + n_i}{2} }$. 

\bibliography{bibliography}

\begin{acknowledgments}
We acknowledge support from \emph{Actions de recherche concert\'ees, Large
Graphs and Networks} of the \emph{Communaut\'e Fran\c caise de Belgique} and
from the Belgian Network DYSCO (Dynamical Systems, Control, and Optimization),
funded by the Interuniversity Attraction Poles Programme, initiated by the
Belgian State, Science Policy Office.
\end{acknowledgments}

\end{document}